\newtheorem{theorem}{Theorem}[section]
\newtheorem{lemma}[theorem]{Lemma}
\begin{document}


\title{Constructing and Sampling Directed Graphs with Linearly Rescaled Degree Matrices}

\author{Yunxiang Yan
}
\authornote{Work performed while at the University of Notre Dame.}
\email{ryan.yunxiang.yan@gatech.edu}
\affiliation{%
  \institution{Georgia Institute of Technology}
  \city{Atlanta}
  \state{Georgia}
  \country{USA}
  \postcode{30332}
}

\author{Meng Jiang}
\email{mjiang2@nd.edu}
\affiliation{%
  \institution{University of Notre Dame}
  \city{Notre Dame}
  \state{Indiana}
  \country{USA}
  \postcode{46556}
}

\begin{abstract}

In recent years, many large directed networks such as online social networks are collected with the help of powerful data engineering and data storage techniques. Analyses of such networks attract significant attention from both the academics and industries. However, analyses of large directed networks are often time-consuming and expensive because the complexities of a lot of graph algorithms are often polynomial with the size of the graph. Hence, sampling algorithms that can generate graphs preserving properties of original graph are of great importance because they can speed up the analysis process. We propose a promising framework to sample directed graphs: Construct a sample graph with linearly rescaled Joint Degree Matrix (JDM) and Degree Correlation Matrix (DCM). Previous work shows that graphs with the same JDM and DCM will have a range of very similar graph properties. We also conduct experiments on real-world datasets to show that the numbers of non-zero entries in JDM and DCM are quite small compared to the number of edges and nodes. Adopting this framework, we propose a novel graph sampling algorithm that can provably preserves in-degree and out-degree distributions, which are two most fundamental properties of a graph. We also prove the upper bound for deviations in the joint degree distribution and degree correlation distribution, which correspond to JDM and DCM. Besides, we prove that the deviations in these distributions are negatively correlated with the sparsity of the JDM and DCM. Considering that these two matrices are always quite sparse, we believe that proposed algorithm will have a better-than-theory performance on real-world large directed networks.

\end{abstract}




\maketitle

\section{Introduction}

Analysis of large directed networks has always been an important topic in data science applications. Such analyses range from the spread of COVID-19 \cite{jo2021social}, twitter fake account detection \cite{ercsahin2017twitter}, to automobile insurance fraud detection \cite{vsubelj2011expert}. However, as researchers show more and more interest in the emerging large directed networks, their attempts to conduct analysis on the whole graph are always hindered by unbearably long running time and running cost because the time complexities of a lot of graph algorithms are polynomial to the size of the graph. One solution is to create a representative sample graph out of the large graph while preserving important properties of the original graph such as degree distributions, degree correlations, and clustering coefficients \cite{lit2020understanding}.

A number of graph sampling algorithms have been proposed in the past. They can be classified into several categories~\cite{leskovec2006sampling,rozemberczki2020little}, such as node sampling~\cite{adamic2001search,stumpf2005subnets,leskovec2006sampling}, edge sampling~\cite{krishnamurthy2005reducing,ahmed2013network}, and exploration-based sampling ~\cite{leskovec2006sampling,hubler2008metropolis,gjoka2010walking,goodman1961snowball,wilson1996generating,leskovec2005graphs,ribeiro2010estimating,maiya2010sampling,lee2012beyond,doerr2013metric,li2015random,zhou2015leveraging,rezvanian2015sampling,rozemberczki2018fast,li2019walking,ricaud2020spikyball}. However, most of them depend on heuristics and thus can not guarantee the performance (i.e., how the important properties are preserved). There is a previous generation-based algorithm \cite{kim2012constructing} which can preserve in/out-degree distributions but its complexity is $O(N*E)$ ($N$: number of nodes, $E$: number of edges), making it impractical to large directed networks. 

\textbf{Present Work.} We propose a new framework of sampling large directed graphs: Constructing a sample graph using linearly rescaled JDM and DCM (Section 3.2). We show that the numbers of non-zero entries in JDM and DCM calculated from large directed network are always way smaller than the number of edges or nodes (Section 4). This property gives algorithms that adopt this framework great potential to be \textbf{efficient} because the constructing process only involves in iterating each non-zero entry of JDM and DCM. Additionally, previous work \cite{tillman2017construction} shows that graphs with the same JDM will have the \textbf{same in-degree and out-degree distributions} as the original graph and may share a lot of \textbf{similar properties} of the original graph, for example, Dyad Census, Triad Census, Shortest Path Distribution, Eigenvalues, Average Neighbor Degrees, etc. Adopting this framework, we propose a sampling algorithm that builds a simple directed graph with given JDM using the D2K construction method showed in \cite{tillman2017construction} (Section 3.2). We prove that this algorithm can preserve degree distributions (Section 5) and the deviation of distributions from the original graph has an upper bound which is negatively correlated to the sparsity of the joint degree matrix (Section 6).

\section{Related Work}

There are two lines of the past work that are related to our work.

\subsection{Construction of Directed 2K Graphs (D2K)}

The taxonomy of graph construction tasks: dK-series\cite{mahadevan2006systematic,orsini2015quantifying,erdHos2015graph} provides an elegant way to trade off accuracy (in terms of graph properties) for complexity (in generating graph realizations). The constructions of both directed and undirected dK-graphs are well understood (i.e., efficient algorithms and realizability conditions are known) for 0K (graphs with prescribed number of vertices and edges), 1K (graphs with a given degree distribution) and 2K (graphs with a given joint degree matrix). In this taxonomy system, $d$ refers to the dimension of the distribution. Specifically, the D2K algorithm we use in our sampling algorithm constructs a graph $\mathcal{G}$ with a target JDM $\mathbf{A}^{\odot}$ and a target DCM $\mathbf{B}^{\odot}$.

\subsection{Construction of Matrices with Line Sums}

Matrices with prescribed row and column sum vectors have always been an object of interest for mathematicians in the field of matrix theory. A lot of work has been done including the construction of these matrices and the necessary and sufficient conditions for such construction~\cite{brualdi2006algorithms,da2009constructing}. We use the graphical conditions and construction algorithm from~\cite{da2009constructing} as our building block of the sampling algorithm. The matrix construction algorithm takes row and column sum vectors and the integer upper bound of each entry value, $p$ as inputs and gives a matrix that satisfies these constraints as the output.

\section{Proposed Method}
\subsection{Matrices and Vectors}
We first introduce the definition of Joint Degree Matrix (JDM) and Degree Correlation Matrix (DCM).
Joint Degree Matrix, $A$ is defined as: a matrix where each entry is the number of \textbf{edges} that have the respective out-degree and in-degree pattern. Degree Correlation Matrix $B$ is defined as: a matrix where each entry is the number of \textbf{nodes} that have the respective out-degree and in-degree:
\begin{eqnarray}
{JDM} & = & \{ a_{ij} | a_{ij} = |\{(v_1, v_2) \in E | d_{v_1}^{out} = i, d_{v_2}^{in} = j \}| \} \\
{DCM} & = & \{ b_{ij} | b_{ij} = |\{v \in V | d_v^{out} = i, d_v^{in} = j\}| \}
\end{eqnarray}

\begin{displaymath}
    {JDM}  =  \{ a_{ij} | a_{ij} = |\{(v_1, v_2) \in E | deg_{out}({v_1}) = i, deg_{in}({v_2}) = j \}| \}
\end{displaymath}

\begin{displaymath}
{DCM}  =  \{ a_{ij} | a_{ij} = |\{v \in V | deg_{out}({v}) = i, deg_{in}({v}) = j\}| \}
\end{displaymath}

$A$ and $B$ can be obtained by first looping through all the edges and calculating the in/out-degrees of each node. Then, we can get $A$ and $B$ by counting the nodes and edges that have the patterns described above with another iteration.

Suppose $A$ is a matrix of size $m$ by $n$, $r_i$ and $s_j$ are the sum of row $i$ and column $j$ of $A$: $r_i = \sum_{j=1}^n a_{ij}$ and $s_j = \sum_{i=1}^m a_{ij}$, where $i = 1, \cdots m$, $j = 1, \cdots, n$.
Then $\sigma_{\mathcal{R}}(\cdot)$ and $\sigma_{\mathcal{C}}(\cdot)$ are the row and column sum functions:
\begin{eqnarray}
\sigma_{\mathcal{R}}({A}) & = & (r_1, \cdots, r_m), \\
\sigma_{\mathcal{C}}({A}) & = & (s_1, \cdots, s_n).
\end{eqnarray}

\subsection{Sampling Framework}
We propose a new graph sampling framework: Construction based Sampling with Joint Degree Matrix (JDM) and Degree Correlation Matrix (DCM). In general, this framework utilizes the favorable property: preserving JDM and DCM can not only guarantee that in/out-degree distributions will be preserved but also help capture the degree pairing patterns in edges and nodes and hence capturing more fundamental graph properties. The process of the framework is illustrated in \textbf{Figure 1} below. For a given graph, we first calculate JDM and DCM and then we conduct the sampling process: we multiply each entry of the matrices by a sample coefficient $k$ and intergerize it with functions such as $floor()$, $ceiling()$ or $round()$. After that, we use a certain construction method that can build a graph that approximately satisfies the linearly rescaled JDM and DCM. Note that the choice of construction method affects the overall performance of the algorithm a lot and in different scenarios, different construction methods may be more preferable because of the existence of accuracy-efficiency tradeoff.
\begin{figure}[h]
  \centering
  \includegraphics[width=\linewidth]{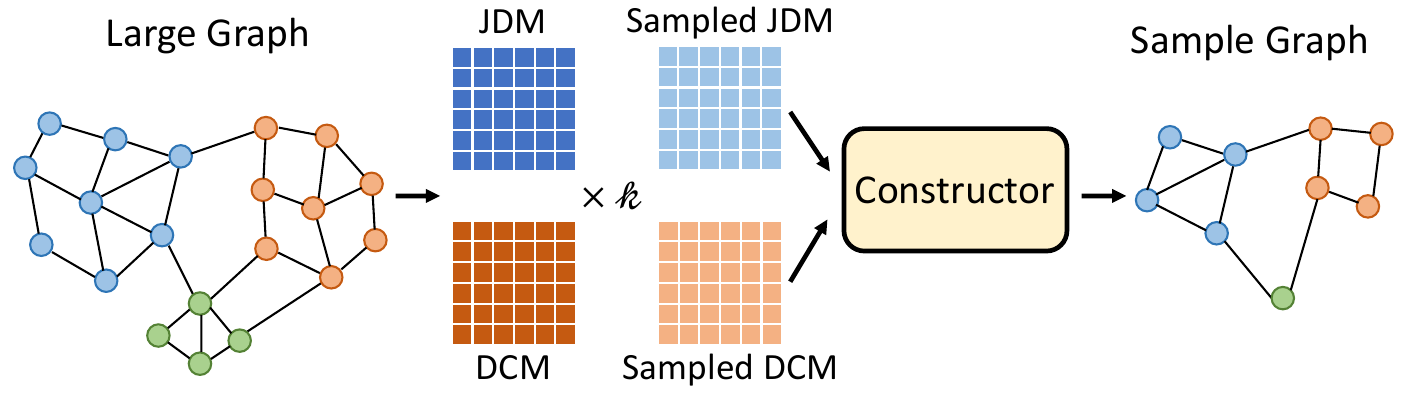}
  \caption{The process of the Construction based Sampling with JDM and DCM Framework}
\end{figure}

\subsection{Sampling Algorithm}

In this section we introduce one sampling algorithm that adopts the framework above. This algorithm uses $D2K$ as its construction method, therefore it also includes the process of adjusting JDM to satisfy the condition of $D2K$. The sampling algorithm takes Joint Degree Matrix $\mathrm{A}$, Degree Correlation Matrix $\mathrm{B}$ and Sample Coefficient $k$ as inputs and constructs a simple direct graph preserving in/out-degree distributions, joint degree distributions and degree correlation distribution (latter two are defined in Section 5.1). It proceeds by first rescales both $\mathrm{A}$ and $\mathrm{B}$ using $k$.
Then by arithmetic operations it gets the row and column sum vectors $\mathbf{r}_{\delta}$ and $\mathbf{c}_{\delta}$ as well as entry upper bound $p$ of the Adjustment Matrix $\mathrm{D}$. After that, we use $GRAPHICAL$ condition to decide if $\mathbf{r}_{\delta}$, $\mathbf{c}_{\delta}$ and $p$ are realizable. If $GRAPHICAL$ returns $\mathbf{TRUE}$ we use $CONSTRUCT$ to build such $\mathrm{D}$ and add it to linearly-rescaled Joint Degree Matrix $\mathrm{A}$. Finally, we use $D2K$ algorithm to build a simple, directed sample graph $\mathcal{G'}$ with target Joint Degree Matrix (after modification) $\mathrm{A}^{\odot}$ and target Degree Correlation Distribution Matrix $\mathrm{B}^{\odot}$. Note that detailed description of $GRAPHICAL$, $CONSTRUCT$ and $D2K$ can be found in \cite{da2009constructing} and \cite{tillman2017construction}.
\begin{algorithm}
\SetKwInOut{Input}{input}
\SetKwInOut{Output}{output}

\Input{Joint Degree Matrix $\mathrm{A}$, Degree Correlation Matrix $\mathrm{B}$, Sample Coefficient $k$}
\Output{A simple directed sample graph $\mathcal{G'}$ preserving four distributions}
\caption{Sampling Algorithm}


$A' \gets \{a_{ij}'|a_{ij}' = \lfloor \frac{1}{k} a_{ij} \rfloor, \forall a_{ij} \in \mathrm{A} \} $

$B' \gets \{b_{ij}' | b_{ij}' = \lceil \frac{1}{k} b_{ij} \rceil, \forall b_{ij} \in \mathrm{B}\} $

$\mathbf{r}_{A'} \gets \sigma_{\mathcal{R}}(\mathrm{A'})$, 
$\mathbf{r}_{B'} \gets \sigma_{\mathcal{R}}(\mathrm{B'})$, 
$\mathbf{c}_{A'} \gets \sigma_{\mathcal{C}}(\mathrm{A'})$, 
$\mathbf{c}_{B'} \gets \sigma_{\mathcal{C}}(\mathrm{B'})$

\For{$i = 1, \cdots, m$}{
$\widetilde{\mathbf{r}_{B'}}^{(i)} := \mathbf{r}_{B'}^{(i)} \cdot i$
}

\For{$j = 1, \cdots, n$}{
$\widetilde{\mathbf{c}_{B'}}^{(j)} := \mathbf{c}_{B'}^{(j)} \cdot j$
}

$\mathbf{r}_{\delta} \gets \widetilde{\mathbf{r}_{B'}} - \mathbf{r}_{A'}$, $\mathbf{c}_{\delta} \gets \widetilde{\mathbf{c}_{B'}} - \mathbf{c}_{A'}$

$\mathrm{L} \gets \{l_{ij} | l_{ij} = \mathrm{r}^{(i)}_{B'} \cdot \mathrm{c}^{(j)}_{B'} - b'_{ij}, \forall b'_{ij} \in B' \}$

$p := \mathbf{min}_{l_{ij}\in L}\{ l_{ij} - a'_{ij}\}$

\eIf{$GRAPHICAL(\mathbf{r}_{\delta}, \mathbf{c}_{\delta}, p) == \mathbf{TRUE}$}{
$D \gets CONSTRUCT(\mathbf{r}_{\delta}, \mathbf{c}_{\delta}, p)$
}
{
\textbf{return} $\mathbf{FALSE}$
}

$\mathrm{A}^{\odot} \gets \mathrm{A}' + \mathrm{D}$,
$\mathrm{B}^{\odot} \gets \mathrm{B}'$

$\mathcal{G'(V',E')} \gets D2K(\mathrm{A}^{\odot}, \mathrm{B}^{\odot})$

\textbf{return} $\mathcal{G'}$ 
\label{alg: algo1}
\end{algorithm}

\section{Sparsity of JDM and DCM}

\begin{table*}[]
\begin{tabular}{|c|c|c|c|c|c|c|c|}
\hline
Category                                & Name            & N       & E       & \# DCM & \# JDM & \% DCM  & \% JDM  \\ \hline
\multirow{5}{*}{Online Social Networks} & FilmTrust trust & 874     & 1853    & 102    & 413    & 11.67\% & 22.29\% \\ \cline{2-8} 
                                        & CiaoDVD trust   & 4658    & 40133   & 834    & 11127  & 17.90\% & 27.73\% \\ \cline{2-8} 
                                        & Epinions        & 75879   & 508837  & 4398   & 96382  & 5.80\%  & 18.94\% \\ \cline{2-8} 
                                        & Twitter (ICWSM) & 465017  & 834797  & 1172   & 25501  & 0.25\%  & 3.05\%  \\ \cline{2-8} 
                                        & Youtube links   & 1138499 & 4942297 & 8859   & 322950 & 0.78\%  & 6.53\%  \\ \hline
\multirow{3}{*}{Citation Networks}      & DBLP            & 12590   & 49759   & 769    & 5201   & 6.11\%  & 10.45\% \\ \cline{2-8} 
                                        & arXiv hep-ph    & 34546   & 421578  & 3925   & 28106  & 11.36\% & 6.67\%  \\ \cline{2-8} 
                                        & CiteSeer        & 384413  & 1751463 & 4030   & 27399  & 1.05\%  & 1.56\%  \\ \hline
\end{tabular}
\caption{The Number of non-negative entries in JDM and DCM compared to the number of edges and nodes in real-world networks}
\label{table1}
\end{table*}

In the proposed sampling algorithm, we use two matrices frequently: Joint Degree Matrix $A$ and Degree Correlation Matrix $B$. As mentioned above, these two matrices are always quite sparse. We illustrate this property by conducting experiments on a variety of different real-world networks that belong to different categories and have different sizes. All datasets are from a free public project called \textbf{The KONECT Project} \cite{kunegis2013konect}. The experimental data are demonstrated in \textbf{Table 1}.

Experimental datasets include five online social networks and three citation networks with node sizes ranging from $10^3$ to $10^7$ and edge sizes ranging from $10^4$ to $10^7$. The meaning of each column is, $N$: the number of nodes, $E$: the number of edges, \# DCM: the number non-zero entris in Degree Correlation Matrix, \# JDM: the number of non-zero entris in Joint Degree Matrix, \% DCM: $\frac{\# DCM}{N}$, \% JDM: $\frac{\# JDM}{E}$.

From the data we can see that \% DCM and \% JDM are below $30\%$ for all datasets, and as the size of graph grows bigger, these percentages tend to drop. For datasets that have a node size larger than $10^6$, the percentages are around 1\%. This observation shows that JDM and DCM are usually quite sparse and the number of non-zero entries in them are quite small compared to the size of nodes and edges, especially when the graph size exceeds certain threshold.

\section{Proof of Validity}

\subsection{Notations}
First, we introduce some notations that will be used in the following sections. $\mathcal{V} = \{v_i\}$ is the set of vertices. $\mathcal{E} \subseteq \{(v_1,v_2) |(v_1, v_2) \in \mathcal{V}^2, v_1 \neq v_2 \}$ denotes the directed edge set where each element belongs to the Cartesian square of set $\mathcal{V}$. $(v_1, v_2)$ represents an edge that is pointing from $v_1$ to $v_2$. $|\mathcal{V}|$ and $|\mathcal{E}|$ are number of vertices and edges in the graph. 
We define $V_{k,p} = \{v \in \mathcal{V} | d_v^p = k\} \subset \mathcal{V}$, $p \in \{in, out \}$ as a vertices subset of $\mathcal{V}$, where all vertices' in-degree (or out-degree, depends on the value of $p$) equals to k. For example, $V_{1, in}$ denotes the subset of nodes with in-degree 1.
Note that in order to avoid confusion we doesn't use $N$ and $E$ to represent the vertices and edges number in section 5 and 6.

We denote $P(k,p) = \frac{|V_{k,p}|}{|\mathcal{V}|}$ where $k = 0,1,2,\cdots$; $p \in \{in, out \}$ as the degree distribution, which equals to the fraction of nodes with certain in-degree (or out-degree). So, $P(k, in)$ is the \textbf{in-degree distribution} and $P(k, out)$ is the \textbf{out-degree distribution}. For example, $P(1, in) = \frac{|V_{1, in}|}{|\mathcal{V}|}$ denotes for the fraction of nodes with in-degree 1.

Furthermore, we define $P(i,j) = \frac{|\{ v \in \mathcal{V} | d_v^{out} = i, d_v^{in} = j\}|}{|\mathcal{V}|}$, $i,j = 0,1,2,\cdots$ as the \textbf{degree correlation distribution}, which equals to the fraction of nodes having certain out-degree and in-degree. Given a pair of nodes $v_i$ and $v_j$ that are connected, we denotes $\tilde{P}(i, j) = \frac{|\{ (v_1, v_2) \in \mathcal{E} | d_{v_1}^{out} = i, d_{v_2}^{in} = j\}|}{|\mathcal{E}|}$ as their \textbf{joint degree distribution}.

\subsection{Consistency}
In this section we will prove the following important fact:
\begin{itemize}
    \item If the $GRAPHICAL$ checking gives us $\mathbf{TRUE}$, the condition of $D2K$ will be automatically satisfied.
\end{itemize}

In \cite{tillman2017construction}, Balint Tillman et al. give the condition for target JDM: $\mathrm{A}^{\odot}$ and target DCM: $ \mathrm{B}^{\odot}$ to be realizable(i.e. graphical). We give the equivalent D2K CONDITION, which is a modified version from their original condition to match the changes of definition and notation in this paper. 
\begin{theorem}[D2K CONDITION]
Let $\mathrm{A}^{\odot}$ be the joint degree matrix, $\mathrm{B}^{\odot}$ be the degree correlation matrix both of size $m$ by $n$. There is a graph $\mathcal{G}$ satisfying $\mathrm{B} = \mathrm{B}^{\odot}$ and $\mathrm{A} = \mathrm{A}^{\odot}$ if and only if
$\forall i = 1, \cdots, m ; j= 1, \cdots, n$,

\begin{equation}
|V_{i, out}| = \sum_j \frac{a^{\odot}_{ij}}{i} = \sum_j b^{\odot}_{ij},
\end{equation}
\begin{equation}
|V_{j, in}| = \sum_i \frac{a^{\odot}_{ij}}{j} = \sum_i b^{\odot}_{ij}.
\end{equation}
\begin{equation}
a^{\odot}_{ij} + b^{\odot}_{ij} \leq |V_{i, out}| \cdot |V_{j, in}|,
\end{equation}
\end{theorem}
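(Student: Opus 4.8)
The plan is to prove the two directions of the equivalence separately, treating necessity as a direct counting argument and sufficiency as a realizability construction. For necessity I would assume a graph $\mathcal{G}$ realizing $\mathrm{A}^{\odot}$ and $\mathrm{B}^{\odot}$ exists and verify each displayed equation by double counting. Equation (5) splits into two claims: the equality $|V_{i,out}| = \sum_j b^{\odot}_{ij}$ follows by partitioning the nodes of out-degree $i$ according to their in-degree $j$, while the equality $|V_{i,out}| = \sum_j \frac{a^{\odot}_{ij}}{i}$ follows by counting the outgoing edges of $V_{i,out}$ in two ways — each of the $|V_{i,out}|$ nodes emits exactly $i$ edges, and grouping those edges by the in-degree of their target yields $\sum_j a^{\odot}_{ij}$, so $i\,|V_{i,out}| = \sum_j a^{\odot}_{ij}$. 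Equation (6) then follows by the symmetric argument on columns, counting incoming edges of $V_{j,in}$.

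Condition (7) is where the simple-graph hypothesis enters. The edges counted by $a^{\odot}_{ij}$ are exactly the directed pairs $(u,w)\in\mathcal{E}$ with $u\in V_{i,out}$ and $w\in V_{j,in}$. Since $\mathcal{G}$ has no self-loops, such an edge requires $u\neq w$, so the number of admissible ordered pairs is at most $|V_{i,out}|\cdot|V_{j,in}| - |V_{i,out}\cap V_{j,in}|$. The intersection $V_{i,out}\cap V_{j,in}$ is precisely the set of nodes with out-degree $i$ and in-degree $j$, which has size $b^{\odot}_{ij}$; hence $a^{\odot}_{ij}\le |V_{i,out}|\cdot|V_{j,in}| - b^{\odot}_{ij}$, which rearranges to (7).

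For sufficiency I would build the graph in two stages. First, use $\mathrm{B}^{\odot}$ to instantiate the vertex set: create $b^{\odot}_{ij}$ vertices labelled with target out-degree $i$ and target in-degree $j$, where conditions (5) and (6) guarantee that the per-degree counts $|V_{i,out}|$ and $|V_{j,in}|$ implied by $\mathrm{A}^{\odot}$ agree with those implied by $\mathrm{B}^{\odot}$, so the labelling is consistent. Second, realize the edges: for each $(i,j)$ we must place $a^{\odot}_{ij}$ distinct edges from $V_{i,out}$ to $V_{j,in}$ so that every vertex attains exactly its prescribed out- and in-degree. I would argue feasibility by appealing to the construction of Tillman et al.~\cite{tillman2017construction}, whose algorithm fills the blocks indexed by $(i,j)$; condition (7) supplies the capacity bound that each block can absorb its required $a^{\odot}_{ij}$ edges even after excluding the $b^{\odot}_{ij}$ forbidden self-connections, while (5) and (6) keep the global out- and in-degree budgets balanced.

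The hard part will be the coupling in the sufficiency direction: a single vertex lies in one out-class $V_{i,out}$ but contributes targets across many blocks indexed by the in-degrees of its neighbours, so the $(i,j)$-blocks cannot be filled independently. Verifying that the capacity condition (7) together with the degree-consistency identities (5) and (6) suffices to place all edges \emph{simultaneously} — with no repeated edges and no self-loops — is exactly the content carried by the D2K construction, so I would invoke that result rather than re-derive the full flow/Gale--Ryser argument here.
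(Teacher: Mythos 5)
The paper never proves this theorem: it imports the condition essentially verbatim (modulo notation) from Tillman et al.\ \cite{tillman2017construction}, so there is no in-paper proof to compare against. Measured on its own terms, your necessity direction is correct and self-contained. The two identities in (5) follow from exactly the double counts you give: partitioning the out-degree-$i$ nodes by their in-degree yields $|V_{i,out}| = \sum_j b^{\odot}_{ij}$, and counting the out-edges of those nodes two ways yields $i\,|V_{i,out}| = \sum_j a^{\odot}_{ij}$; equation (6) is the column-symmetric statement. Your derivation of (7) is also the standard one: in a simple digraph each ordered pair carries at most one edge, the pairs with $u=w$ are forbidden, and the diagonal of $V_{i,out}\times V_{j,in}$ has size $|V_{i,out}\cap V_{j,in}| = b^{\odot}_{ij}$, giving $a^{\odot}_{ij} \le |V_{i,out}|\cdot|V_{j,in}| - b^{\odot}_{ij}$.

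Your sufficiency direction, however, is a deferral rather than a proof: you correctly flag that the $(i,j)$-blocks cannot be filled independently and that showing conditions (5)--(7) suffice for a simultaneous, repeat-free, loop-free realization is the substantive content of the theorem, but you then hand that content to the D2K construction of \cite{tillman2017construction} as a black box. Since the paper itself does exactly the same thing --- it cites the result and only proves (in Lemmas 5.2 and 5.3) that its algorithm's outputs satisfy the three conditions --- your write-up is, if anything, more explicit than the paper about which half of the equivalence is elementary and which half genuinely requires the external construction. If a fully self-contained proof were demanded, the sufficiency half would still be missing.
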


We show that the D2K CONDITION will be automatically satisfied by the process of our sampling algorithm if $GRAPHICAL$ gives $\mathbf{TRUE}$. 

Next, we divide the task into two lemmas and provide proofs for each of them.

\begin{lemma}[Condition 1 and 2]
If $GRAPHICAL(\mathbf{r}_{\delta}, \mathbf{c}_{\delta}, p) == \mathbf{TRUE}$, by following the steps of \textbf{Algorithm 1}, the first and second D2K CONDITION will be satisfied.
\end{lemma}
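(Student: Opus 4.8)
The plan is to treat the first and second D2K conditions as purely arithmetic identities on the row and column margins of the matrices $\mathrm{A}^{\odot}$ and $\mathrm{B}^{\odot}$ produced by Algorithm~1, and to trace each margin back to the vectors $\mathbf{r}_{\delta}, \mathbf{c}_{\delta}$ that are fed into $CONSTRUCT$. The one external fact I would invoke is the correctness guarantee of $CONSTRUCT$ from \cite{da2009constructing}: whenever $GRAPHICAL(\mathbf{r}_{\delta}, \mathbf{c}_{\delta}, p) = \mathbf{TRUE}$, the triple is realizable and $\mathrm{D} = CONSTRUCT(\mathbf{r}_{\delta}, \mathbf{c}_{\delta}, p)$ is a nonnegative integer matrix with $\sigma_{\mathcal{R}}(\mathrm{D}) = \mathbf{r}_{\delta}$ and $\sigma_{\mathcal{C}}(\mathrm{D}) = \mathbf{c}_{\delta}$. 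In particular, $GRAPHICAL = \mathbf{TRUE}$ certifies that such a $\mathrm{D}$ genuinely exists with these exact margins.

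For the first condition I would fix a row index $i \in \{1, \dots, m\}$ and compute $\sum_j a^{\odot}_{ij}$ directly. Since $\mathrm{A}^{\odot} = \mathrm{A}' + \mathrm{D}$, the $i$-th row sum of $\mathrm{A}^{\odot}$ is the $i$-th row sum of $\mathrm{A}'$ plus the $i$-th row sum of $\mathrm{D}$, namely $\mathbf{r}_{A'}^{(i)} + \mathbf{r}_{\delta}^{(i)}$. Substituting the algorithm's definition $\mathbf{r}_{\delta}^{(i)} = \widetilde{\mathbf{r}_{B'}}^{(i)} - \mathbf{r}_{A'}^{(i)} = i \cdot \mathbf{r}_{B'}^{(i)} - \mathbf{r}_{A'}^{(i)}$, the $\mathbf{r}_{A'}^{(i)}$ terms cancel and I obtain $\sum_j a^{\odot}_{ij} = i \cdot \mathbf{r}_{B'}^{(i)}$. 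Because $i \geq 1$ I may divide by $i$ to get $\sum_j \frac{a^{\odot}_{ij}}{i} = \mathbf{r}_{B'}^{(i)}$, and since $\mathrm{B}^{\odot} = \mathrm{B}'$ the right-hand margin is $\sum_j b^{\odot}_{ij} = \mathbf{r}_{B'}^{(i)}$ as well. The two quantities therefore coincide, which is exactly the first D2K condition; their common value is what defines $|V_{i,out}|$. The second condition is entirely symmetric, working column-wise with $\mathbf{c}_{\delta}^{(j)} = j \cdot \mathbf{c}_{B'}^{(j)} - \mathbf{c}_{A'}^{(j)}$ and using $\sigma_{\mathcal{C}}(\mathrm{D}) = \mathbf{c}_{\delta}$ to conclude $\sum_i \frac{a^{\odot}_{ij}}{j} = \mathbf{c}_{B'}^{(j)} = \sum_i b^{\odot}_{ij}$.

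The argument is mostly bookkeeping, so I do not expect a genuine obstacle; the care lies in two implicit points. First, the identity $\sum_j a^{\odot}_{ij} = i \cdot \mathbf{r}_{B'}^{(i)}$ makes the division by $i$ exact: although the individual quotients $a^{\odot}_{ij}/i$ need not be integers, their sum is the integer $\mathbf{r}_{B'}^{(i)}$, so $|V_{i,out}|$ emerges as a well-defined vertex count (and symmetrically for $|V_{j,in}|$). Second, the whole derivation rests on $\mathrm{D}$ having row margin exactly $\mathbf{r}_{\delta}$ and column margin exactly $\mathbf{c}_{\delta}$; this is precisely what $CONSTRUCT$ supplies once $GRAPHICAL$ has certified $(\mathbf{r}_{\delta}, \mathbf{c}_{\delta}, p)$ as realizable, a certification that in particular forces the nonnegativity of $\mathbf{r}_{\delta}, \mathbf{c}_{\delta}$ and the global balance $\sum_i \mathbf{r}_{\delta}^{(i)} = \sum_j \mathbf{c}_{\delta}^{(j)}$ needed for such a $\mathrm{D}$ to exist at all. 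Hence, under the hypothesis $GRAPHICAL = \mathbf{TRUE}$, the cancellation above goes through verbatim for every row and every column, establishing both conditions.
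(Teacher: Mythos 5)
Your proposal is correct and follows essentially the same route as the paper's own proof: both reduce the first two D2K conditions to the row/column margin identity $\sum_j a^{\odot}_{ij} = i\cdot\sum_j b^{\odot}_{ij}$ (and its column analogue), substitute $\mathrm{A}^{\odot}=\mathrm{A}'+\mathrm{D}$ with $\sigma_{\mathcal{R}}(\mathrm{D})=\mathbf{r}_{\delta}$, and cancel via the definition $\mathbf{r}_{\delta}^{(i)} = i\cdot\mathbf{r}_{B'}^{(i)} - \mathbf{r}_{A'}^{(i)}$. Your version is slightly more careful in making explicit the reliance on $CONSTRUCT$ returning a matrix with exactly the prescribed margins and on the exactness of the division by $i$, both of which the paper leaves implicit.
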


\begin{proof}
Equation (5) and (6) are equivalent to the following identity: 
\begin{equation}
    \sum_j a^{\odot}_{ij} = i \cdot \sum_j b^{\odot}_{ij}
\end{equation}
\begin{equation}
    \sum_i a^{\odot}_{ij} = j \cdot \sum_i b^{\odot}_{ij}
\end{equation}

Without loss of generality, we only need to show (8)
holds for an arbitrary choice of $i$ and $j$.
Note that we have:
\begin{displaymath}
a^{\odot}_{ij} = a_{ij}' + d_{ij}, b^{\odot}_{ij} = b_{ij}'
\end{displaymath}
\begin{equation}
\therefore (8) \longleftrightarrow \sum_j a_{ij}' + \sum_j d_{ij} = i \cdot \sum_j b_{ij}'
\end{equation}
According to \textbf{Algorithm 1}, (10) is equivalent to:
\begin{equation}
\mathbf{r}_{A'}^{(i)} + \mathbf{r}_{\delta}^{(i)} = i \cdot \mathbf{r}_{B'}^{(i)}
\end{equation}
\begin{displaymath}
\because \mathbf{r}_{\delta}^{(i)} = \widetilde{\mathbf{r}_{B'}^{(i)}} - \mathbf{r}_{A'}^{(i)} = i \cdot \mathbf{r}_{B'}^{(i)} - \mathbf{r}_{A'}^{(i)}
\end{displaymath}
\begin{equation}
\therefore \mathbf{r}_{A'}^{(i)} + \mathbf{r}_{\delta}^{(i)} = \mathbf{r}_{A'}^{(i)} + i \cdot \mathbf{r}_{B'}^{(i)} - \mathbf{r}_{A'}^{(i)} = i \cdot \mathbf{r}_{B'}^{(i)}
\end{equation}
This completes the proof of LEMMA 5.2

\end{proof}

\begin{lemma}[Condition 3]
If $GRAPHICAL(\mathbf{r}_{\delta}, \mathbf{c}_{\delta}, p) == \mathbf{TRUE}$, following the steps of \textbf{Algorithm 1}, the third D2K CONDITION will be satisfied.
\end{lemma}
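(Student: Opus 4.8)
The plan is to reduce the third condition to a single per-cell inequality and then read that inequality off directly from the way the entry cap $p$ is defined in \textbf{Algorithm 1}. Using the substitutions $a^{\odot}_{ij} = a'_{ij} + d_{ij}$ and $b^{\odot}_{ij} = b'_{ij}$ from the construction step, Equation (7) becomes equivalent to
\begin{displaymath}
a'_{ij} + d_{ij} + b'_{ij} \leq |V_{i, out}| \cdot |V_{j, in}|
\end{displaymath}
for every $i$ and $j$. So the whole lemma amounts to bounding the combined edge-plus-node mass at each cell by the product of the corresponding out-degree and in-degree node counts, and it suffices to verify this for an arbitrary fixed $(i,j)$.

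First I would rewrite the right-hand side using LEMMA 5.2. Since Conditions 1 and 2 are already established, I can invoke $|V_{i, out}| = \sum_j b^{\odot}_{ij} = \mathbf{r}_{B'}^{(i)}$ and $|V_{j, in}| = \sum_i b^{\odot}_{ij} = \mathbf{c}_{B'}^{(j)}$, which turns the target into
\begin{displaymath}
a'_{ij} + d_{ij} + b'_{ij} \leq \mathbf{r}_{B'}^{(i)} \cdot \mathbf{c}_{B'}^{(j)}.
\end{displaymath}
This product is precisely what the matrix $\mathrm{L}$ records, since $l_{ij} = \mathbf{r}_{B'}^{(i)} \cdot \mathbf{c}_{B'}^{(j)} - b'_{ij}$, so the inequality to prove is equivalent to the cleaner statement $a'_{ij} + d_{ij} \leq l_{ij}$.

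The key step is to exploit the global upper bound $p$. Because $\mathrm{D} \gets CONSTRUCT(\mathbf{r}_{\delta}, \mathbf{c}_{\delta}, p)$ and the matrix-with-line-sums procedure of \cite{da2009constructing} returns a matrix every entry of which respects the prescribed entry cap, we have $d_{ij} \leq p$ for all $i,j$. Since $p := \min_{l_{ij} \in L}\{l_{ij} - a'_{ij}\}$ is a minimum taken over all cells, it is in particular no larger than the slack at our fixed cell, i.e. $p \leq l_{ij} - a'_{ij}$. Chaining the two bounds yields $d_{ij} \leq p \leq l_{ij} - a'_{ij}$, hence $a'_{ij} + d_{ij} \leq l_{ij}$, which is exactly the reduced inequality and therefore establishes Condition 3.

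I expect the main obstacle to be justifying that $CONSTRUCT$ truly enforces $d_{ij} \leq p$ on every individual entry rather than merely reproducing the aggregate line sums $\mathbf{r}_{\delta}$ and $\mathbf{c}_{\delta}$; this is exactly the contract of the construction routine in \cite{da2009constructing}, where $p$ is the integer upper bound on each entry and where $GRAPHICAL$ returning $\mathbf{TRUE}$ guarantees that such a bounded $\mathrm{D}$ actually exists. A secondary point I would flag explicitly is the nonnegativity of $p$: if any cell forced $a'_{ij} > l_{ij}$ then $p$ would be negative, no valid $\mathrm{D}$ could exist, and $GRAPHICAL$ would return $\mathbf{FALSE}$, so conditioning on the $\mathbf{TRUE}$ branch is precisely what keeps the bound $d_{ij} \leq p$ both consistent and non-vacuous.
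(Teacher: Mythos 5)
Your proof is correct and follows essentially the same route as the paper's: substitute $a^{\odot}_{ij}=a'_{ij}+d_{ij}$ and $b^{\odot}_{ij}=b'_{ij}$, reduce Condition (7) to the per-cell inequality $d_{ij}\leq l_{ij}-a'_{ij}$, and conclude from $d_{ij}\leq p=\min_{l_{ij}\in L}\{l_{ij}-a'_{ij}\}$. If anything, your version is cleaner: you write $|V_{i,out}|\cdot|V_{j,in}|=\mathbf{r}_{B'}^{(i)}\cdot\mathbf{c}_{B'}^{(j)}$ directly from Conditions (5)--(6), which matches the algorithm's definition of $l_{ij}$, whereas the paper's proof carries a spurious factor of $ij$ in that product that is inconsistent with how $L$ is defined in Algorithm 1; your explicit remarks on the entry-cap contract of $CONSTRUCT$ and the nonnegativity of $p$ are also worthwhile additions the paper leaves implicit.
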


\begin{proof}
From the proof of LEMMA 5.2, we know that:
\begin{displaymath}
\because |V_{i, out}| \cdot |V_{j, in}| = ij \cdot \sum_j b_{ij}^{\odot} \cdot \sum_i b_{ij}^{\odot} = ij \cdot \sum_j b_{ij}' \cdot \sum_i b_{ij}'
\end{displaymath}
and 
\begin{displaymath}
a^{\odot}_{ij} + b^{\odot}_{ij} = a_{ij}' + d_{ij} + b_{ij}'
\end{displaymath}
Thus, (7) is equivalent to:
\begin{equation}
d_{ij} \leq ij \cdot \sum_j b_{ij}' \cdot \sum_i b_{ij}' - b_{ij}' - a_{ij}' = l_{ij} - a_{ij}'
\end{equation}
According to algorithm $CONSTRUCT$, we have
\begin{displaymath}
d_{ij} \leq p = \mathbf{min}_{l_{ij}\in L}\{ l_{ij} - a'_{ij}\} 
\end{displaymath}
\begin{displaymath}
\therefore d_{ij} \leq l_{ij} - a'_{ij}, \forall l_{ij}\in L
\end{displaymath}
This completes the proof for LEMMA 5.3

\end{proof}

\subsection{Preserving Distributions}

In this section we show that the sample graph has the \textbf{same in/out-degree distribution} and the \textbf{same degree correlation distribution} as the original graph $\mathcal{G}$, i.e. $P(k,in)$, $P(k,out)$ and $P(i,j)$. Additionally, we will also show that the joint degree distribution $\tilde{P}'(i, j)$ of sample graph will also be similar to the joint degree distribution $\tilde{P}(i, j)$ of the original graph with an upper bound that will be studied in Section 6.

We construct two important variable $\mathring{a_{ij}}$ and $\mathring{b_{ij}}$ as below:
\begin{equation}
    \mathring{a_{ij}} = \frac{1}{k} a_{ij}, \mathring{b_{ij}} = \frac{1}{k} b_{ij}.
\end{equation}
By definition of $a_{ij}'$ and $b_{ij}'$ we have:
\begin{equation}
a_{ij}' = \lfloor \frac{1}{k} a_{ij} \rfloor = \lfloor \mathring{a_{ij}} \rfloor, b_{ij}' = \lceil \frac{1}{k} b_{ij} \rceil = \lceil \mathring{b_{ij}} \rceil
\end{equation}
We get two important inequalities:
\begin{equation}
    \mathring{a_{ij}} -1 < a_{ij}' \leq \mathring{a_{ij}}, \mathring{b_{ij}} \leq b_{ij}' < \mathring{b_{ij}} + 1
\end{equation}
Firstly, we show that the degree correlation distribution is preserved.
From the definition of $P(i,j)$ and Degree Correlation Matrix (DCM) $B$ we have:
\begin{equation}
P(i,j) = \frac{|\{ v \in V | d_v^{out} = i, d_v^{in} = j\}|}{|V|} = \frac{b_{ij}}{\sum_{i,j}b_{ij}} = \frac{\frac{1}{k}b_{ij}}{\frac{1}{k}\sum_{i,j}b_{ij}} 
\end{equation}
From step 18 in \textbf{Algorithm 1} we have:
\begin{equation}
a^{\odot}_{ij} = a_{ij}' + d_{ij}, b^{\odot}_{ij} = b_{ij}'
\end{equation}
By 
(8), (9), (14) and (15):
\begin{displaymath}
P(i,j) =
\frac{\frac{1}{k}b_{ij}}{\frac{1}{k}\sum_{i,j}b_{ij}} = \frac{\mathring{b_{ij}}}{\sum_{i,j}\mathring{b_{ij}}} \approx \frac{b_{ij}'}{\sum_{i,j}b_{ij}'} = \frac{b_{ij}^{\odot}}{\sum_{i,j}b_{ij}^{\odot}}
= P^{\odot}(i,j)
\end{displaymath}

Thus the sample graph $\mathcal{G}'$ has the same degree correlation distribution and the only deviation is from the Integerization process. This deviation is quantified in Section 6.

Note that $P(i,j)$ is the joint probability mass function of $P(k,in)$ and $P(k,out)$:
\begin{displaymath}
P(k,in) = P(\cdot, k) = \sum_i P(i,k); P(k,out) = P(k, \cdot) = \sum_j P(k,j).
\end{displaymath}

Therefore the sample graph $\mathcal{G}'$ will automatically also preserve the in/out-degree distribution of the original graph when the degree correlation distribution is preserved.

From the definitions of joint degree distribution and Joint Degree Matrix (JDM), we have the following observation:
\begin{equation}
\tilde{P}(i, j) = \frac{|\{ (v_1, v_2) \in E | d_{v_1}^{out} = i, d_{v_2}^{in} = j\}|}{|E|} = \frac{a_{ij}}{\sum_{i,j}a_{ij}}.
\end{equation}
Because of (8), (14) and (15)
, similarly, we have 
\begin{displaymath}
\tilde{P}(i, j)
=
\frac{a_{ij}}{\sum_{i,j}a_{ij}} = \frac{\frac{1}{k}a_{ij}}{\sum_{i,j}\frac{1}{k}a_{ij}} = \frac{\mathring{a_{ij}}}{\sum_{i,j}\mathring{a_{ij}}}
\approx \frac{a_{ij}'}{\sum_{i,j}a_{ij}'}.
\end{displaymath}

Because 
$a^{\odot}_{ij} = a_{ij}' + d_{ij}$, the joint degree distribution of sample graph $\tilde{P}^{\odot}(i, j) = \frac{a_{ij}^{\odot}}{\sum_{i,j}a_{ij}^{\odot}}$ is similar but different from the joint degree distribution of the original graph$\frac{a_{ij}'}{\sum_{i,j}a_{ij}'}$. In next section, we will show that the deviation caused by $d_{ij}$ also has an upper bound.

\section{Deviation Analysis}
In this section, we attempt to quantify the deviations of degree distributions by giving an upper bound for them. 

\subsection{Integerization Deviation}

From (16), we can derive the following inequalities:
\begin{equation}[\textbf{Deviation}]
    \frac{\mathring{b_{ij}}}{\sum_{i,j}\mathring{b_{ij}} + mn -1} 
    < 
    \frac{b_{ij}'}{\sum_{i,j}b_{ij}'} 
    < 
    \frac{\mathring{b_{ij}} + 1}{\sum_{i,j}\mathring{b_{ij}} + 1}
\end{equation}
and
\begin{equation}[\textbf{Deviation}]
    \frac{\mathring{a_{ij}} - 1}{\sum_{i,j}\mathring{a_{ij}} - 1}
    < 
    \frac{a_{ij}'}{\sum_{i,j}a_{ij}'} 
    < 
    \frac{\mathring{a_{ij}}}{\sum_{i,j}\mathring{a_{ij}} - mn +1} 
\end{equation}

\begin{displaymath}
    \left(\frac{\mathring{b_{ij}}}{\sum_{i,j}\mathring{b_{ij}} + mn -1} , \frac{\mathring{b_{ij}} + 1}{\sum_{i,j}\mathring{b_{ij}} + 1}\right)
\end{displaymath}

\begin{displaymath}
    \left(\frac{\mathring{a_{ij}} - 1}{\sum_{i,j}\mathring{a_{ij}} +mnp -p - 1}, \frac{\mathring{a_{ij}}+p}{\sum_{i,j}\mathring{a_{ij}} +p - mn +1} \right)
\end{displaymath}

where $m$ is the number of rows and $n$ is the number of columns.
From \textbf{Algorithm 1} we know that $b_{ij}^{\odot} = b_{ij}'$. Thus (20) quantifies the deviation brought by Integerization for degree correlation distribution $P^{\odot}(i,j)$. 
\begin{equation}
    \frac{\mathring{b_{ij}}}{\sum_{i,j}\mathring{b_{ij}} + mn -1} 
    < 
    \frac{b_{ij}^{\odot}}{\sum_{i,j}b_{ij}^{\odot}} 
    =
    P^{\odot}(i,j)
    < 
    \frac{\mathring{b_{ij}} + 1}{\sum_{i,j}\mathring{b_{ij}} + 1}
\end{equation}
From \textbf{Algorithm 1} we know that $a^{\odot}_{ij} = a_{ij}' + d_{ij}$. Therefore, for joint degree distribution $\tilde{P}^{\odot}(i, j)$, however, we still need to consider the change in value brought by $d_{ij}$. We know from algorithm CONSTRUCT that $0 \leq d_{ij} \leq p$
\begin{displaymath}
    \frac{a_{ij}'}{\sum_{i,j}a_{ij}' + mnp -p}
    <
    \frac{a_{ij}'+d_{ij}}{\sum_{i,j}a_{ij}'+\sum_{i,j}d_{i,j}}
    =
    \frac{a_{ij}^{\odot}}{\sum_{i,j}a_{ij}^{\odot}} 
    <
    \frac{a_{ij}' + p}{\sum_{i,j}a_{ij}' + p}
\end{displaymath}
Combining the above inequalities with (21), we have:
\begin{equation}
    \frac{\mathring{a_{ij}} - 1}{\sum_{i,j}\mathring{a_{ij}} +mnp -p - 1}
    < 
    \frac{a_{ij}^{\odot}}{\sum_{i,j}a_{ij}^{\odot}} 
    =
    \tilde{P}^{\odot}(i,j)
    < 
    \frac{\mathring{a_{ij}}+p}{\sum_{i,j}\mathring{a_{ij}} +p - mn +1} 
\end{equation}
This quantifies the deviation of $\tilde{P}^{\odot}(i, j)$ from $\tilde{P}(i, j)$.

\subsection{Effects of Sparsity}

Moreover, if we take the sparsity of the graph into account, we can get a more accurate bound of deviation. This is because the deviation of Intergerization does not happen to those entries that are zero in original matrices ($a_{ij}' = \mathring{a_{ij}}, \forall a_{ij} = 0$, $b_{ij}' = \mathring{b_{ij}}, \forall b_{ij} = 0$). Hence, the sparsity of the original JDM and DCM will directly affect the deviation from Intergerization.

We define the {sparsity coefficient} of row $i$ in matrix $A_{m \times n}$ as:
\begin{displaymath}
    s_{\mathcal{R}}^A(i) = \frac{I_{(j \in \{1, \cdots, n\})} a_{ij} = 0}{n}.
\end{displaymath}
Similarly the sparsity coefficient of column $j$ in matrix $A_{m \times n}$ is:
\begin{displaymath}
    s_{\mathcal{C}}^A(j) = \frac{I_{(i \in \{1, \cdots, m\})} a_{ij} = 0}{m}.
\end{displaymath}
$I$ is the indicator function. Note that greater the {sparsity coefficient} is, more sparse that line/column will be (i.e. the fraction of 0 in that line/column).
Using the definition of {sparsity coefficient}, we can refine the qualification inequalities (20), (23) as:
\begin{equation}
    \frac{\mathring{b_{ij}}}{\sum_{i,j}\mathring{b_{ij}} + m_B'\cdot n_B'-1} 
    < 
    P^{\odot}(i,j)
    < 
    \frac{\mathring{b_{ij}} + 1}{\sum_{i,j}\mathring{b_{ij}} + 1},
\end{equation}
and
\begin{equation}
    \frac{\mathring{a_{ij}} - 1}{\sum_{i,j}\mathring{a_{ij}} +m_A' n_A'p -p - 1}
    < 
    \tilde{P}^{\odot}(i,j)
    < 
    \frac{\mathring{a_{ij}}+p}{\sum_{i,j}\mathring{a_{ij}} +p - m_A' n_A' +1},
\end{equation}
where

$m_A' = m \cdot (1-s_{\mathcal{C}}^A(j))$, $n_A' = n \cdot (1-s_{\mathcal{R}}^A(i))$

$m_B' = m \cdot (1-s_{\mathcal{C}}^B(j))$, $n_B' = m \cdot (1-s_{\mathcal{R}}^B(i))$

Note that greater the {sparsity coefficient}, smaller the deviation of both $P^{\odot}(i,j)$ from $P(i,j)$ and $\tilde{P}^{\odot}(i, j)$ from $\tilde{P}(i, j)$. This property has strong realistic meaning because we show in section 4 that JDM and DCM are always quite sparse and they tend to get more sparse as the size of the graph increases. Therefore the real performance of the sampling algorithm proposed in this paper will be better than the range presented in Section 6.1.




\section{Conclusions}

We propose a new sampling framework that is efficient and is able to preserve important graph properties. Based on this framework we provide a new sampling algorithm using D2K construction method. We prove that this algorithm can preserve in/out-degree distributions, joint degree distributions and degree correlation distributions. We also analyze the effects of the JDM, DCM sparsity on deviation of degree distributions and provide upper bounds that are modified with sparsity coefficients for deviations. Additionally, we use experiments to show that JDM and DCM of real-world graphs are always sparse, which lends credence to the belief that the proposed sampling algorithm will have a better-than-theory performance on real-life large directed networks. Finally, it is worth pointing out that by utilizing more efficient construction algorithms the potential of the framework may be more thoroughly realized. Hence, future work on finding faster and more accurate construction algorithms with JDM and DCM is worth conducting.

\begin{acks}
This work was supported in part by NSF IIS-1849816, IIS-2142827, IIS-2146761, and ONR N00014-22-1-2507.
\end{acks}

\section*{Code Availability}
The source code for this paper is openly available at \url{https://github.com/RaccoonOnion/sample}.

\bibliographystyle{ACM-Reference-Format}
\bibliography{main}

\end{document}